\newtheorem{theorem}{Theorem}[section]
\newtheorem{lemma}[theorem]{Lemma}
\theoremstyle{definition}
\newtheorem{remark}[theorem]{Remark}
\newcommand{\R}{\mathbb{R}}
\newcommand{\supp}{\mathop{\textup{supp}}}
\newcommand{\singsupp}{\mathop{\textup{singsupp}}}
\newcommand{\WF}{\mathop{\textup{WF}}}
\newcommand{\Char}{\mathop{\textup{Char}}}
\begin{document}

\title{Microlocal analysis of fractional wave equations}
\author{
 G\"{u}nther H\"{o}rmann\thanks{Faculty of Mathematics, University of Vienna, Oskar-Morgenstern-Platz 1, 1090 Wien, Austria,
 guenther.hoermann@univie.ac.at},
 Ljubica Oparnica\thanks{Faculty of Education in Sombor, University of Novi Sad, Podgori\v{c}ka 4, 25000 Sombor, Serbia,
 ljubica.oparnica@gmail.com},
 Du\v{s}an Zorica\thanks{Mathematical Institute, Serbian Academy of Arts and Sciences, Kneza Mihaila 36, 11000 Belgrade, Serbia, dusan\textunderscore zorica@mi.sanu.ac.rs
 and Department of Physics, Faculty of Sciences, University of Novi Sad, Trg D. Obradovi\'{c}a 4, 21000 Novi Sad, Serbia}}

\maketitle

\begin{abstract}
We determine the wave front sets of solutions to two special cases
of the Cauchy problem for the space-time fractional Zener wave
equation, one being fractional in space, the other being
fractional in time. For the case of the space fractional wave
equation, we show that no spatial propagation of singularities
occurs. For the time fractional Zener wave equation, we show an
analogue of non-characteristic regularity.

\textbf{Key words}: wave front set, space-time fractional wave
equation, Cauchy problem, fractional Zener model, fractional
strain measure
\end{abstract}

\section{Introduction}

This paper is devoted to the microlocal analysis of the solution
to the generalized Cauchy problem for the space-time fractional
Zener wave equation
\begin{equation} \label{Duhamel} Zu\left(
x,t\right) =\partial _{t}^{2}u(x,t)-L_{t}^{\alpha }\partial _{x}
\mathcal{E}_{x}^{\beta }u(x,t)=u_{0}(x)\otimes \delta ^{\prime
}\left( t\right) +v_{0}(x)\otimes \delta \left( t\right)
\end{equation}
considered as an equation on all $\R^2$ with $\supp(u) \subseteq
\{(x,t) \in \R^2 \mid t \geq 0\}$ and $u_0, v_0 \in
\mathcal{E}'(\R)$. The generalized Cauchy problem \eqref{Duhamel}
is derived and analyzed in \cite{AJOPZ}, where existence and
uniqueness of distributional solutions has been shown. In the
present paper we study the wave front set for special cases of
\eqref{Duhamel} when $\alpha=0$, or $\beta=1$.\\

The operators $L_{t}^{\alpha }$ and $\mathcal{E}_{x}^{\beta }$ are
of convolution type,  with respect to one variable only, denoted
by $\ast_t$ and $\ast_x$, respectively, and act on a distribution
$w=w(x,t)$ in the following way
\begin{eqnarray}
L_{t}^{\alpha } w &=&\mathcal{F}_{\tau \rightarrow t}^{-1}\left[
\frac{1+b\, \mathrm{e}^{\mathrm{i}\frac{\alpha \pi }{2}}\left(
\tau -\mathrm{i}0\right) ^{\alpha
}}{1+a\,\mathrm{e}^{\mathrm{i}\frac{\alpha \pi }{2}}\left( \tau -
\mathrm{i}0\right)^{\alpha }}\right] \ast _{t} w,\quad \alpha \in
\left[0,1\right), 0 < a < b, \label{L} \\
\mathcal{E}_{x}^{\beta } w &=& \mathcal{F}_{\xi \rightarrow
x}^{-1}\left[ i \sin(\frac{\beta \pi}{2})\, \text{sgn}(\xi)\,
|\xi|^\beta \right] \ast_{x} w, \quad \beta \in (0,1). \label{E}
\end{eqnarray}
Note that $\mathcal{E}_{x}^{1} w = \frac{\partial }{\partial x} w$
and, in case $0 < \beta < 1$, we have $$\mathcal{E}_{x}^{\beta } w
= \frac{1}{2\Gamma \left( 1-\beta \right) }\frac{1}{{{\left\vert
x\right\vert }^{\beta }}} \ast_{x}\frac{\partial }{\partial x}
w.$$

The operator $L_t^\alpha$, considered as a convolution operator in
one variable, is linear and bounded $L^p(\R) \to L^p(\R)$, $1 < p
< \infty$, by H\"ormander's multiplier theorem (cf.\
\cite[Corollary 8.11]{duoa} or \cite[Theorem 7.9.5]{hermander1}),
since $l_\alpha$, defined by
\begin{equation}
l_\alpha(\tau) = \frac{1+b\, \mathrm{e}^{\mathrm{i}\frac{\alpha
\pi }{2}}\left( \tau -\mathrm{i}0\right) ^{\alpha
}}{1+a\,\mathrm{e}^{\mathrm{i}\frac{\alpha \pi }{2}}\left( \tau -
\mathrm{i}0\right)^{\alpha }}
\end{equation}
is in $L^\infty(\R) \cap C^1(\R \setminus \{0\})$ with derivative
bounded by a constant times $|\tau|^{-1}$. Note that in
\cite{AJOPZ}, the operator $L_t^\alpha$ is employed in its Laplace
transform variant $L_{t}^{\alpha } w =\mathcal{L}_{s\rightarrow
t}^{-1}\left[ \frac{1+bs^{\alpha }}{1+as^{\alpha }}\right] \ast
_{t} w$.

The operator $\mathcal{E}_x^\beta$, acting by convolution in one
variable with $h(x) := |x|^{- \beta}$ (apart from a constant),
following a differentiation, is a bounded linear operator
$W^{1,p}(\R) \to L^q(\R)$, $1 < p < q < \infty$ and $\frac{1}{p} +
\beta = \frac{1}{q} + 1$, since by the
Hardy-Littlewood-Sobolev-inequality \cite[Theorem
4.5.3]{hermander1} the map $w \mapsto h \ast w$ is
continuous $L^p(\R) \to L^q(\R)$ in this setting.\\

The space-time fractional Zener wave equation
\begin{equation}
Zu\left(
x,t\right):=\partial_{t}^{2}u(x,t)-L_{t}^{\alpha}\partial_{x}
\mathcal{E}_{x}^{\beta}u(x,t)=0,\;\;x\in \mathbb{R},\;t>0,
\label{fzwe}
\end{equation}
subject to initial conditions
\begin{equation}
u(x,0)=u_{0}(x),\;\;\frac{\partial }{\partial
t}u(x,0)=v_{0}(x),\;\;x\in \mathbb{R},  \label{pu}
\end{equation}
is derived in \cite{AJOPZ} from the system of three equations: The
equation of motion of a (one-dimensional) deformable body, the
constitutive equation and the non-local strain measure. In
dimensionless form, the system of equations reads
\begin{eqnarray}
\frac{\partial }{\partial x}\sigma (x,t) &=& \frac{\partial
^{2}}{\partial t^{2}}u(x,t),  \label{em} \\
\sigma (x,t)+a\, {}_{0}\mathrm{D}_{t}^{\alpha }\sigma (x,t)
&=&\varepsilon (x,t)+b \, {}_{0}\mathrm{D}_{t}^{\alpha
}\varepsilon (x,t),
\label{ce} \\
\varepsilon (x,t) &=&\mathcal{E}_{x}^{\beta }u\left( x,t\right),
\label{sm}
\end{eqnarray}
where $\alpha \in \left[0,1\right)$, $\beta \in \left(
0,1\right)$, ${}_{0}\mathrm{D}_{t}^{\alpha }$ is the fractional
differential operator, defined as follows. Let $t, \gamma \in
\mathbb{R}$ and $H$ denote the Heaviside function. Then one
defines
\begin{equation*} f_{\gamma }(t)=\left\{
\begin{array}{ll}
\displaystyle\frac{t^{\gamma -1}}{\Gamma (\gamma )}H(t), & \gamma >0, \\
\displaystyle\frac{\mathrm{d}^{N}}{\mathrm{d}t^{N}}f_{\gamma
+N}(t), & \gamma \leq 0,\;\gamma +N>0,\;N\in \mathbb{N}
\end{array}
\right.
\end{equation*}
and for $g \in \mathcal{S}^\prime$, with support in the region
$t>0$,
\begin{equation*}
{}_{0}\mathrm{D}_{t}^{\gamma } g =f_{-\gamma }\ast _{t} g
=\frac{\mathrm{d}}{\mathrm{d}t}f_{1-\gamma }\ast _{t} g.
\end{equation*}

Upon Fourier transform we may solve (\ref{ce}) with respect to
$\sigma$ by
\begin{equation}
\sigma (x,t)=\mathcal{F}_{\tau \rightarrow t}^{-1}\left[
\frac{1+b\,\mathrm{e}^{\mathrm{i}\frac{\alpha \pi }{2}}\left( \tau
-\mathrm{i} 0\right) ^{\alpha
}}{1+a\,\mathrm{e}^{\mathrm{i}\frac{\alpha \pi }{2}}\left( \tau
-\mathrm{i}0\right) ^{\alpha }}\right] \ast _{t}\varepsilon (x,t).
\label{sigma}
\end{equation}
Indeed, by \cite[Example 7.1.17]{hermander1},
$\widehat{f_{1-\alpha}} =\mathrm{e}^{-\mathrm{i}\frac{\left(
1-\alpha \right) \pi }{2}}\left( \tau -\mathrm{i} 0\right)^{\alpha
-1}$, implying also $\widehat{ f_{-\alpha }} =
\mathcal{F}(\frac{d}{dt} f_{1-\alpha }) =\mathrm{i}\tau
\,\mathrm{e}^{-\mathrm{i}\frac{\left( 1-\alpha \right) \pi
}{2}}\left( \tau-\mathrm{i}0\right) ^{\alpha
-1}=\mathrm{e}^{\mathrm{i}\frac{\alpha \pi }{2}}\left( \tau
-\mathrm{i}0\right) ^{\alpha}$, hence we find that (\ref{sigma})
solves (\ref{ce}). Finally, inserting this into \eqref{em} and
observing \eqref{sm}, we arrive at Equation \eqref{fzwe}. As in
\cite{AJOPZ}, we study the Cauchy problem (\ref{fzwe}-\ref{pu})
with distributional initial values in the form (\ref{Duhamel}).


In Section \ref{SecSWE} we analyze the microlocal properties of
the space-fractional wave equation and in Section \ref{SecZWE} we
address an analogue of the non-characteristic regularity of
solutions to the time-fractional Zener wave equation.

\section{The space-fractional wave equation}\label{SecSWE}

We consider the solution for the special case of problem \eqref{Duhamel} with $u_0, v_0 \in \mathcal{E}'(\R)$ when $\alpha =0$ and $0 < \beta < 1$, which leads to the so-called \emph{space-fractional wave equation}
\begin{equation}
Zu\left( x,t\right) =\partial _{t}^{2}u(x,t)-\partial
_{x}\mathcal{E}_{x}^{\beta }u(x,t)=u_{0}(x)\otimes \delta ^{\prime
}\left( t\right) + v_0(x) \otimes \delta(t). \label{zet}
\end{equation}
With $b_{\beta } := \sqrt{\sin \frac{\beta \pi }{2}}$ we have
 the solution $u$ with $\supp(u) \subseteq \{ t \geq 0\}$ in the form (cf.\ \cite{AJOPZ})
\begin{equation}
u = u_{0}\ast_{x} \underbrace{\mathcal{F}_{\xi \rightarrow x}^{-1}
       \left[ \cos \left( b_{\beta }
            \left\vert \xi \right\vert^{\frac{1+\beta }{2}}t\right) H(t) \right]}_{E_0}
    + v_0 \ast_x \underbrace{\mathcal{F}_{\xi \rightarrow x}^{-1}
       \left[ \frac{\sin \left( b_{\beta }
            \left\vert \xi \right\vert^{\frac{1+\beta }{2}}t\right)}{b_{\beta }
                \left\vert \xi \right\vert^{\frac{1+\beta }{2}}} H(t) \right]}_{E_1}.
                \label{u-as}
\end{equation}

\begin{remark} \label{fund-rem}
\begin{trivlist}
\item{(i)} We observe that $E_1$ is a \emph{fundamental solution} of $Z$, i.e., $Z E_1 = \delta$. Furthermore, the equations $E_0 = \partial_t E_1$ and $Z E_0 (x,t) = \delta (x) \otimes \delta'(t)$ hold on $\R^2$.

Furthermore, both $E_0$ and $E_1$ are weakly smooth with respect to $t$ when $t \neq 0$. We note that $t \mapsto E_1(t)$ is continuous $\R \to \mathcal{S}'(\R)$ with $E_1(0) = 0$, whereas $\lim_{t \to 0+} E_0(t) = \delta \neq 0 = \lim_{t \to 0-} E_0(t)$; $E_0$ is weakly measurable with respect to $t \in \R$.

\item{(ii)} It is apparent from \eqref{u-as} and the assumption
$u_0, v_0 \in \mathcal{E}'(\R)$ that the partial $x$-Fourier
transform $\mathcal{F}_{x \to \xi}(u)$ of $u$ is a continuous
function with respect to $\xi$ and of moderate growth. Hence the
multiplication $\widehat{g_\gamma} \cdot \mathcal{F}_{x \to \xi}
(u)$ with $\widehat{g_\gamma}(\xi) := |\xi|^\gamma$  ($-1 < \gamma
< 1$) gives a locally integrable function of moderate growth with
respect to $\xi$, and $G_\gamma u := \mathcal{F}^{-1}_{\xi \to
x}(\widehat{g_\gamma} \cdot \mathcal{F}_{x \to \xi} (u))$ is
defined in $\mathcal{S}'(\R^2)$. The same is true, if in place of
$u$ we consider $\tilde{u} = \mathcal{F}^{-1}_{\xi \to x}(\exp(i
b_\beta |\xi|^{\frac{1 + \beta}{2}} t)) \ast_x u_0$ and other
similarly constructed distributions, e.g., $w=G_\sigma\tilde{u}$.
We will repeatedly make use of this fact within the current
section in course of the following proofs.

\item{(iii)} For fixed $t > 0$, the linear operators of
convolution with $E_0(t)$ and $E_1(t)$ are bounded $L^p(\R) \to
L^p(\R)$, if $1 < p < \infty$, by H\"ormander's multiplier
theorem, since their Fourier transforms are in $L^\infty(\R) \cap
C^1(\R \setminus \{0\})$ with derivative bounded by a constant
times $|\xi|^{-1}$ (cf.\ \cite[Corollary 8.11]{duoa} or
\cite[Theorem 7.9.5]{hermander1}).
\end{trivlist}
\end{remark}

\begin{lemma}\label{fund-res} For $j=0,1$, let $E_j^+$ denote the restriction of $E_j$  to the open half-space $\{ t > 0\}$. Then the wave front sets are given by
$$
   \WF(E_0^+) = \WF(E_1^+) =
      \{ (0,t;\xi,0) \mid t > 0, \xi \neq 0\}=:W_0.
$$
\end{lemma}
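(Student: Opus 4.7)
My plan is to work with $E_j^+$ as an oscillatory integral: writing $\cos$ and $\sin$ as sums of complex exponentials yields two phases
\[
\phi_\pm(x,t,\xi)=x\xi\pm b_\beta|\xi|^{(1+\beta)/2}t,
\]
with amplitude $1$ for $E_0^+$ and $(2ib_\beta|\xi|^{(1+\beta)/2})^{-1}$ for $E_1^+$. Two elementary features drive the whole argument: (a) $\partial_\xi\phi_\pm=x\pm b_\beta\tfrac{1+\beta}{2}\mathrm{sgn}(\xi)|\xi|^{(\beta-1)/2}t\to x$ as $|\xi|\to\infty$, because $(\beta-1)/2<0$, and (b) the associated $t$-frequency $\partial_t\phi_\pm=\pm b_\beta|\xi|^{(1+\beta)/2}=o(|\xi|)$, so the codirection $(\xi,\partial_t\phi_\pm)$ normalises to $(\mathrm{sgn}(\xi),0)$. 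Observation (a) will confine the singular support to $\{x=0\}$; observation (b) will pin the microlocal direction there to $(\xi,0)$.

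\textbf{Inclusion $\WF(E_j^+)\subseteq W_0$.} For $(x_0,t_0)$ with $x_0\neq 0$ and $t_0>0$, iterated integration by parts in $\xi$ using $L=(i\partial_\xi\phi_\pm)^{-1}\partial_\xi$, legitimate by (a) once $|\xi|$ is large, shows that $E_j^+$ is $C^\infty$ near $(x_0,t_0)$; hence $\singsupp(E_j^+)\cap\{t>0\}\subseteq\{x=0\}$. For the remaining points $(0,t_0)$ pick a product cutoff $\chi=\chi_1(x)\chi_2(t)$ with $\chi_2\in C_c^\infty((0,\infty))$, $\chi_1(0)\neq 0$, $\chi_2(t_0)\neq 0$, and compute
\[
\widehat{\chi E_0^+}(\xi,\tau)=\tfrac{1}{4\pi}\int \widehat{\chi_1}(\xi-\eta)\bigl[\widehat{\chi_2}(\tau-b_\beta|\eta|^{(1+\beta)/2})+\widehat{\chi_2}(\tau+b_\beta|\eta|^{(1+\beta)/2})\bigr]\,d\eta,
\]
and similarly for $E_1^+$. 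If $\tau_0\neq 0$, then in a narrow cone $\Gamma$ around $(\xi_0,\tau_0)$, $|\tau|$ is comparable to $|(\xi,\tau)|$. Split the $\eta$-integral at $|\eta-\xi|=|(\xi,\tau)|^{1/2}$: the outer piece has rapid decay from Schwartz $\widehat{\chi_1}$, while on the inner piece $|\eta|\leq|\xi|+|(\xi,\tau)|^{1/2}$ gives $b_\beta|\eta|^{(1+\beta)/2}\leq C|(\xi,\tau)|^{(1+\beta)/2}=o(|\tau|)$, so $\bigl|\tau\mp b_\beta|\eta|^{(1+\beta)/2}\bigr|\geq|\tau|/2$ for $|(\xi,\tau)|$ large and the Schwartz decay of $\widehat{\chi_2}$ takes over. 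Thus $\widehat{\chi E_j^+}$ decays rapidly in $\Gamma$, giving $\WF(E_j^+)\subseteq W_0$.

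\textbf{Inclusion $W_0\subseteq\WF(E_j^+)$ and main obstacle.} For $(0,t_0;\xi_0,0)\in W_0$, test $\widehat{\chi E_j^+}$ along the sequence $(\xi_R,\tau_R)=(R\,\mathrm{sgn}(\xi_0),\,b_\beta R^{(1+\beta)/2})$, $R\to\infty$, which lies in any prescribed conic neighbourhood of $(\xi_0,0)$ since $\tau_R/|\xi_R|=b_\beta R^{(\beta-1)/2}\to 0$. In the integral formula above, the $\widehat{\chi_2}(\tau_R+b_\beta|\eta|^{(1+\beta)/2})$ summand decays rapidly (argument $\gtrsim b_\beta R^{(1+\beta)/2}$), while substitution $\eta=\xi_R+\delta$ together with the Taylor expansion $b_\beta|\xi_R+\delta|^{(1+\beta)/2}-b_\beta R^{(1+\beta)/2}=O(R^{(\beta-1)/2}|\delta|)\to 0$ uniformly on bounded $\delta$-sets, plus dominated convergence via Schwartz $\widehat{\chi_1}$, yields
\[
\widehat{\chi E_0^+}(\xi_R,\tau_R)\to \tfrac{1}{2}\,\chi_1(0)\,\widehat{\chi_2}(0)\neq 0,
\]
so $(0,t_0;\xi_0,0)\in\WF(E_0^+)$. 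For $E_1^+$ the same limit is modified only by the algebraically decaying factor $(2ib_\beta R^{(1+\beta)/2})^{-1}$, which still precludes rapid decay. The main technical obstacle I expect is supplying a uniform integrable dominating envelope in this last step: since the Taylor expansion is valid only for $|\delta|$ small compared to $R$, one must split the $\delta$-integral into $|\delta|\leq R^{1-\epsilon}$ (where Taylor applies) and its complement (where Schwartz decay of $\widehat{\chi_1}$ absorbs the contribution), and verify that both estimates survive uniformly as $R\to\infty$; analogous care is needed to make the splitting in the second step uniform in conic direction.
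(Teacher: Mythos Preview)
Your argument is correct and takes a genuinely different route from the paper. For the inclusion $\WF(E_j^+)\subseteq W_0$, the paper keeps the phase \emph{linear}, $\phi(x,t,\xi)=x\xi$, absorbs $\cos(b_\beta|\xi|^{(1+\beta)/2}t)$ into the amplitude, checks that this amplitude lies in the exotic H\"ormander class $S^0_{(1-\beta)/2,(1+\beta)/2}$, and then invokes the general oscillatory-integral theorem \cite[Theorem~8.1.9]{hermander1}. You instead put the oscillation into the (inhomogeneous) phase $\phi_\pm$ and do the stationary-phase and Fourier estimates by hand; this avoids the exotic symbol calculus at the cost of more explicit computation. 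For the reverse inclusion, the paper uses the $\xi\mapsto-\xi$ symmetry to reduce to nonsmoothness at $x=0$, scales to $t=1$, composes with an elliptic operator $Q_\theta$, and then appeals to an explicit asymptotic expansion of Hardy (recorded in Stein) for $\mathcal{F}^{-1}[|\xi|^{-\theta}e^{ib_\beta|\xi|^\sigma}]$ near $x=0$. Your approach---testing $\widehat{\chi E_0^+}$ along the curve $(\xi_R,\tau_R)=(R,b_\beta R^{(1+\beta)/2})$ and showing it converges to a nonzero constant---is more elementary and self-contained; it is essentially the same idea that later in the paper (Theorem~\ref{december}) is upgraded to a full propagation-of-singularities argument.

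Two small points. First, for the limit $\tfrac12\chi_1(0)\widehat{\chi_2}(0)$ to be nonzero you need $\widehat{\chi_2}(0)=\int\chi_2\neq 0$, so take $\chi_2\geq 0$; also recall that showing non-rapid decay for \emph{product} cutoffs of arbitrarily small support suffices, since rapid decay passes from any cutoff $\chi$ to any $\psi$ supported in $\{\chi\neq 0\}$. Second, your dominated-convergence worry for $E_0^+$ is in fact trivial: the integrand is bounded pointwise by $\|\widehat{\chi_2}\|_\infty|\widehat{\chi_1}(-\delta)|$, which is integrable and $R$-independent, so no splitting is needed. For $E_1^+$ the extra factor $|\eta|^{-(1+\beta)/2}$ does make the domination slightly more delicate near $\eta=0$; you can either carry out your proposed splitting, or bypass the issue entirely by arguing as the paper does: $\partial_t E_1^+=E_0^+$ gives $\WF(E_0^+)\subseteq\WF(E_1^+)$, which combined with $\WF(E_1^+)\subseteq W_0$ (already established) finishes the job.
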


\begin{proof}

From $\partial_t E_1^+ = E_0^+$ we immediately deduce
\begin{equation}
   \WF (E_0^+) \subseteq \WF(E_1^+) \subseteq
      \WF(E_0^+) \cup \{ (x,t ; \xi,0) \mid t > 0, \xi \neq 0 \}, \tag{$*$}
\end{equation}
where the right-most set corresponds to the characteristic set of $\partial_t$ when considered as partial differential operator on $\R \times \,]0,\infty[$.

\paragraph{Claim 1:}  Both, $\WF(E_0^+)$ and $\WF(E_1^+)$, are contained in $W_0$.

Let $t > 0$, put $e_0^+(\xi,t) := \cos ( b_{\beta } |\xi|^{\frac{1+\beta }{2}} t )$ and $e_1^+(\xi,t) := \sin ( b_{\beta } |\xi|^{\frac{1+\beta }{2}}t)/
(b_\beta |\xi|^{\frac{1+\beta }{2}})$, and choose $\rho \in C^{\infty}(\R)$ such that $\rho(\xi) = 0$ for $|\xi| \leq 1/2$ and $\rho (\xi) =1$ for $|\xi| \geq 1$.  Then at fixed arbitrary $t > 0$ and $j \in \{0,1\}$ we have
$$
E_j^+(t) = \mathcal{F}_{\xi \to x}^{-1} \left(e_j^+(\xi,t)\right) = \mathcal{F}^{-1}_{\xi \to x} \left(e_j^+(\xi,t) \rho(\xi)\right) + \mathcal{F}_{\xi \to x}^{-1} \left(e_j^+(\xi,t) (1 - \rho(\xi))\right) =: F_{j,1}(t) + F_{j,2}(t).
$$
We observe that $(\xi,t) \mapsto e_j^+(\xi,t) (1 - \rho(\xi))$ is continuous, has compact $\xi$-support, and is smooth with respect to $t$, more precisely, $e_j^+ (1 - \rho) \in C^\infty(]0,\infty[, C_{\text{c}}(\R))$, hence by linearity, commutativity with $\frac{d}{dt}$, and continuity of the inverse Fourier transform with respect to $\xi$, we have $F_{j,2} \in C^\infty \left( ]0,\infty[, \mathcal{F}^{-1}\left(C_{\text{c}}(\R)\right) \right) \subseteq C^\infty \left( ]0,\infty[, C^\infty(\R) \right) \subseteq C^\infty(]0,\infty[ \, \times \R)$, thus, $\WF(E_j^+) = \WF(F_{j,1})$.

Note that $a_j(x,t,\xi) := e_j^+(\xi,t) \rho(\xi)$ define functions in $C^\infty(\R^2\times \R)$ ($j=0,1$) and that $a_1(x,t,\xi) = \int_0^t a_0(x,s,\xi) \, ds$. Furthermore, $a_0$ is a symbol of class $S^0_{\frac{1-\beta}{2},\frac{1+\beta}{2}}(\R^2\times \R)$, since $e_0^+$ is the real part of a function of a special case of the type discussed in \cite[Example 1.1.5.]{hermander-paper} (with appropriate choices of parameters and variable names); here, the condition $0 \leq \beta < 1$ is crucial. In fact, the corresponding symbol estimates need only be carried out in the region $|\xi| > 1$ and are elementary in our case. We also see directly that $x$-derivatives of $a_1$ vanish, any $\xi$-derivative of $a_1$ and $a_0$---as well as $a_1$ and $a_0$ themselves---have essentially the same bounds when $(x,t)$ vary in a compact subset and $|\xi| > 1$; furthermore, any $t$-derivative of $a_1$ brings us back to estimating $a_0$ with one order less in the $t$-derivatives, thus $a_1$ is contained in the same symbol class as $a_0$.

To complete the proof of Claim 1, we observe that $F_{j,1}$ ($j=0,1$), being an inverse Fourier transform, can be written as oscillatory integral (in the sense of \cite[Theorem 7.8.2]{hermander1}) with symbol $a_j(x,t,\xi)/(2\pi)$ and phase funtion $\phi(x,t,\xi) = x \, \xi$ in both cases. Thus, according to \cite[Theorem 8.1.9]{hermander1}, the only contributions to the wave front sets can stem from points with stationary phase, i.e.,
\begin{eqnarray*}
WF\left( E_j^+ \right) &\subseteq &\left\{ \left( x,t; \partial_x \phi(x,t,\xi), \partial_t \phi(x,t,\xi) \right) \mid t > 0, \exists \, \xi \neq 0 \colon \partial
_{\xi }\phi \left( x,t,\xi \right) =0\right\}  \\
&=&\left\{ \left( 0,t,\xi ,0\right) \mid t > 0, \xi \neq 0\right\} = W_{0}.
\end{eqnarray*}

\paragraph{Claim 2:} $W_0 \subseteq \WF(E_0^+)$.

Note that due to the symmetry of $\mathcal{F}(E_0^+)$ with respect
to $\xi \mapsto -\xi$ and the result of Claim 1 we have
$(0,t;\xi,0) \in \WF(E_0^+)$ $\Leftrightarrow$ $(0,t;-\xi,0) \in
\WF(E_0^+)$ $\Leftrightarrow$ $(0,t) \in \singsupp(E_0^+)$. Thus,
it suffices to show that $E_0^+$ is nonsmooth along the half axis
$x = 0$, $t > 0$.

We introduce $\widetilde{E}(t)  := \mathcal{F}_{\xi \rightarrow
x}^{-1}
       \left[ \exp\left(i b_{\beta } t |\xi|^{\frac{1 + \beta}{2}}\right) \right]$ and observe that by the elementary relations $\cos(z) = (\exp(iz) + \overline{\exp(iz)})/2$ and $\mathcal{F}^{-1}(\overline{v})(x) = \overline{\mathcal{F}^{-1}(v)(-x)}$ and employing the ad-hoc notation $R^*$ for the pull-back by $R(x,t) = (-x,t)$ we may write
$$
  E_0^+ = \frac{1}{2} \left( \widetilde{E} + \overline{R^* \widetilde{E}} \right).
$$
Since we are here concerned solely with the question of smoothness at the points $(0,t)$ for any $t > 0$ and $t \mapsto E_0^+(t)$ as well as $t \mapsto \widetilde{E}(t)$ is smooth, we may note: $E_0^+$ is non-smooth at $(0,t)$ $\Leftrightarrow$ $E_0^+(t)$ is non-smooth at $x = 0$ $\Leftrightarrow$ $\widetilde{E}(t)$ is non-smooth at $x = 0$.

Let $f:=\mathcal{F}_{\xi \rightarrow x}^{-1}\left[
\mathrm{e}^{\mathrm{i} b_{\beta }\left\vert \xi \right\vert
^{\sigma }}\right] \in \mathcal{S}'(\R)$, abbreviating $\sigma = (1 + \beta)/2$, and observe that for $t > 0$, rescaling by $\xi \mapsto \xi\, t^{1/\sigma}$ on the Fourier transformed side, we have
$\widetilde{E}\left(t\right) =f\left( ./{t^{1/\sigma }}\right)/{t^{1/\sigma }}$ and therefore may reduce the question of smoothness of $\tilde{E}\left( t\right)$ at $x = 0$ further to that of smoothness of $f$ at zero.

Choose $\eta \in
C^{\infty }\left(\mathbb{R}\right)$ such that $\eta =0$ near $\xi
=0$, $\eta =1$ for $\left\vert \xi \right\vert >1$, and write
$\hat{f}=\left( 1-\eta \right) \hat{f}+\eta \hat{f}.$ Since
$\left( 1-\eta \right) \hat{f}\in C_{c}\left( \mathbb{R} \right)$, smoothness of $f$ is equivalent to that of $\mathcal{F}^{-1} \left(\eta \hat{f}\right)$.

Let $\theta>0$ and put
\begin{equation*}
Q_{\theta} w := \mathcal{F}_{\xi \rightarrow x}^{-1}\left[ \frac{\eta
\left( \xi \right) }{\left\vert \xi \right\vert ^{\theta}}\right] \ast
w\;\;\text{and}\;\;m_{\theta}\left( \xi \right)
:=\mathcal{F}_{x\rightarrow \xi }\left[ Q_{\theta}f \right] \left( \xi
\right) =\frac{\eta \left( \xi \right) }{\left\vert \xi
\right\vert ^{\theta}}\mathrm{e}^{\mathrm{i}b_{\beta }\left\vert \xi
\right\vert ^{\sigma }}.
\end{equation*}
Here, $Q_\theta$ is a pseudodifferential operator with
symbol $q_{\theta} (x,\xi) = \eta(\xi)/ |\xi|^{\theta}$, which clearly satisfies $\left\vert q_{\theta}\left( \xi \right) \right\vert = 1 / {\left\vert \xi \right\vert ^{\theta}}$, if $\left\vert \xi \right\vert >1$, hence $Q_\theta$ is elliptic (of order $-\theta$). Thus, smoothness of $f$ at $0$ turns out to be equivalent to smoothness of $Q_{\theta} f$ at $0$. Note that $Q_\theta f$ is smooth off $0$ by essentially the same arguments used as with $E_0$ and the symbol $a_0$ in Claim 1.

The non-smoothness of $Q_\theta f$ at $0$ is shown thanks to an asymptotic result by G. H. Hardy mentioned in \cite[p.\ 357, 5.3(ii)]{stein}, with parameters $a$, $b$ there to be identified with $\sigma$,  $\theta$ respectively; note that $1 > \sigma = (1+\beta)/2 > 1/2$ and let us suppose that $\theta > (1-\frac{\sigma}{2}) / (1-\sigma) > 0$; then we conclude that there is some constant $c_2 > 0$ such that
\begin{equation*}
\left( Q_{\theta}f\right) \left( x\right)
=c_{1}\mathrm{e}^{\mathrm{i}\frac{c_{2}}{\left\vert x\right\vert
^{\alpha }}}\frac{1}{\left\vert x\right\vert ^{\gamma }}+O\left(
\frac{1}{\left\vert x\right\vert ^{\gamma -\frac{\alpha
}{2}}}\right)
\qquad (x \to 0),
\end{equation*}
where $\alpha =\frac{\sigma }{1-\sigma }$ and $\gamma
=\frac{\theta \left(1 - \sigma \right) - 1 + \frac{\sigma
}{2}}{2\sigma - 1}>0.$ Thus, $Q_{\theta} f (x)$ cannot be bounded as $x\rightarrow 0$ and therefore cannot be continuous near $x=0$, which completes the proof of Claim 2.

\medskip

From Claims 1 and 2 in conjunction with the first inclusion relation in ($*$) established at the beginning of the proof, we obtain
$$
  W_0 \subseteq \WF(E_0^+) \subseteq \WF(E_1^+) \subseteq W_0,
$$
which implies that equality holds throughout and completes the proof.
\end{proof}

\bigskip

Based on the results of Lemma \ref{fund-res} we will investigate
the influence of the singularities in the initial data $u_0$ and
$v_0$ on the wave front set of the solution $u$ to (\ref{zet}). We
emphasize that the proof of Theorem \ref{december} below uses only
the inclusion relation $\WF(E_j^+) \subseteq W_0$ in its first
part and provides an independent, more general, proof of equality
in this relation---thus substituting the argument of Claim 2 above
drawing on Hardy's asymptotics by advanced microlocal techniques.

\begin{remark} If $v_0 = 0$ and $t > 0$ the solution formula \eqref{u-as} simply means  $u^+(t) = E_0^+(t) \ast u_0$.  Since $\singsupp(E_0^+(t)) = \{0\}$ a smooth cut-off $\rho$ near $x = 0$ implies $(E_0^+(t)(1 - \rho)) \ast u_0 \in C^\infty(\R)$, hence it suffices to investigate the singularity structure of $(E_0^+(t) \rho) \ast u_0$, where now both convolution factors belong to $\mathcal{E}'(\R)$. At fixed $t$, this enables us to employ the methods and results from \cite[Section 16.3]{hermander2} on singular supports of convolutions (or to extend these techniques to wave front sets as suggested by H\"ormander directly after the statement of \cite[Definition 16.3.2]{hermander2}). Though a bit technical, it is not difficult to see that one will then obtain equality of the closed convex hulls of $\singsupp(u^+(t))$ and $\singsupp(u_0)$. However, even having information on $\WF(u^+(t))$ for every $t > 0$ would not yield precise microlocal information on $\WF(u^+)$ in terms of the two-dimensional directions in the cotangent fiber.
\end{remark}

\begin{theorem} \label{december}Let $u_{0}, v_0\in \mathcal{E}^{\prime
}\left(\mathbb{R}\right)$ and denote by $u^+$ the restriction of the solution $u$ to \eqref{zet} to the half-space of future time $\R \times \, ]0,\infty[$, then $\WF(u^+)$ is invariant under translations $(x,t) \mapsto (x,t+s)$ with $s > 0$ and
\begin{equation*}
\WF\left( u^+ \right) \subseteq \left\{ \left( x,t ; \xi ,0\right) \mid t > 0,  \left(
x,\xi \right) \in WF\left( u_{0}\right) \text{ or }
  \left( x,\xi \right) \in WF\left( v_{0}\right) \right\}.
\end{equation*}
Moreover, in case $v_0$ is smooth we have the more precise statement
$$
  \WF\left( u^+ \right) = \left\{ \left( x,t ; \xi ,0\right) \mid t > 0,  \left(x,\xi \right) \in WF\left( u_{0}\right) \right\},
$$
and similarly $\WF\left( u^+ \right) = \left\{ \left( x,t ; \xi ,0\right) \mid t > 0, \left(x,\xi \right) \in WF\left( v_{0}\right) \right\}$, if $u_0$ is smooth.
\end{theorem}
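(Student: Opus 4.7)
The plan is to derive the theorem from the representation \eqref{u-as} together with the one-sided inclusion $\WF(E_{j}^{+}) \subseteq W_{0}$ from Lemma~\ref{fund-res}, using H\"ormander's composition calculus for the upper bound and a direct partial Fourier analysis along the dispersion curves $\tau=\pm b_{\beta}|\xi|^{\sigma}$, with $\sigma=(1+\beta)/2$, for the remaining assertions.

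For the upper bound I will view $Tw:=w\ast_{x}E_{j}^{+}$ as an operator with Schwartz kernel $K_{j}(x,t,y)=E_{j}^{+}(x-y,t)$. Pulling $W_{0}$ back along the submersion $(x,t,y)\mapsto(x-y,t)$ yields $\WF(K_{j}) \subseteq \{(x,t,x;\xi,0,-\xi):t>0,\xi\ne 0\}$, whose projection onto the $Y$-cotangent factor is empty. Hence the composability hypothesis of H\"ormander's theorem on $\WF(Ku)$ \cite[\S8.2]{hermander1} holds automatically, and the resulting composition $\WF'(K_{j})\circ\WF(u_{0})$ equals $\{(x,t;\xi,0):(x,\xi)\in\WF(u_{0}),t>0\}$; adding the analogous $v_{0}$-contribution yields the stated inclusion for $\WF(u^{+})$.

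For translation invariance and the special-case equality, fix $(x_{0},\xi_{0})$, $t_{0}>0$, and cutoffs $\phi=\psi\otimes\chi$ with $\psi\in C_{c}^{\infty}(\R)$ localising near $x_{0}$, $\chi\in C_{c}^{\infty}(\,]0,\infty[\,)$ localising near $t_{0}$, and $\widehat{\chi}(0)\ne 0$. Decomposing $\cos$ and $\sin$ into exponentials, one obtains a representation of $\widehat{\phi u^{+}}(\xi,\tau)$ as a $\xi'$-convolution of $\widehat{u_{0}}$ and $\widehat{v_{0}}/(ib_{\beta}|\xi'|^{\sigma})$ against $\widehat{\psi}$, weighted by the two Schwartz shifts $\widehat{\chi}(\tau\mp b_{\beta}|\xi'|^{\sigma})$. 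For large $|\xi|$ in direction $\xi_{0}$ the $\widehat\psi$-factor confines $\xi'$ to an $O(1)$-window around $\xi$ where $b_{\beta}(|\xi'|^{\sigma}-|\xi|^{\sigma})=O(|\xi|^{\sigma-1})\to 0$, and Taylor-expanding the $\widehat{\chi}$-factors in this small parameter leads to the model asymptotic
\begin{equation*}
\widehat{\phi u^{+}}(\xi,\tau) = A_{+}(\xi)\,\widehat{\chi}(\tau-b_{\beta}|\xi|^{\sigma}) + A_{-}(\xi)\,\widehat{\chi}(\tau+b_{\beta}|\xi|^{\sigma}) + R(\xi,\tau),
\end{equation*}
with $A_{\pm}(\xi):=\tfrac{1}{2}\bigl(\widehat{\psi u_{0}}(\xi)\pm \widehat{\psi v_{0}}(\xi)/(ib_{\beta}|\xi|^{\sigma})\bigr)$ depending only on the initial data, and $R$ rapidly decreasing. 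Since $\sigma<1$ forces the curves $\tau=\pm b_{\beta}|\xi|^{\sigma}$ into every conic neighbourhood of $(\xi_{0},0)$, the condition $(x_{0},t_{0};\xi_{0},0)\notin\WF(u^{+})$ reduces to simultaneous rapid decrease of $A_{+}$ and $A_{-}$ in direction $\xi_{0}$, a criterion manifestly independent of $t_{0}>0$; this establishes translation invariance. When $v_{0}\in C_{c}^{\infty}(\R)$, $\widehat{v_{0}}$ is Schwartz and hence the partial Fourier symbol of $v_{0}\ast_{x}E_{1}^{+}$ is Schwartz in $\xi$ smoothly in $t>0$, giving $v_{0}\ast_{x}E_{1}^{+}\in C^{\infty}(\R\times\,]0,\infty[\,)$; moreover $A_{\pm}\sim\tfrac{1}{2}\widehat{\psi u_{0}}$ modulo rapid decay, so the rapid decay of both $A_{\pm}$ reduces to $(x_{0},\xi_{0})\notin\WF(u_{0})$, yielding the stated equality. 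The case of smooth $u_{0}$ is symmetric, with the factor $|\xi|^{-\sigma}$ elliptic for $|\xi|\ge 1$.

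The delicate step will be converting the formal Taylor expansion into a \emph{rapidly decreasing} remainder $R$ (not merely $o(1)$): one must sum the series whose $k$-th term is proportional to $\widehat{\chi}^{(k)}(\tau-b_{\beta}|\xi|^{\sigma})(|\xi'|^{\sigma}-|\xi|^{\sigma})^{k}/k!$, recognise each resulting $\xi'$-convolution as a $\widehat{\psi_{k}u_{0}}$-type expression for a derivative-like cutoff $\psi_{k}$ supported where $\psi$ is, hence still rapidly decreasing in direction $\xi_{0}$ under the standing hypothesis on $\WF(u_{0})$. Summability of the series is controlled by the factor $|\xi|^{k(\sigma-1)}$, for which $\sigma\in(1/2,1)$ --- equivalently $\beta\in(0,1)$ --- is essential; the bookkeeping here closely parallels the exotic-symbol-class estimate for $a_{0}\in S^{0}_{(1-\beta)/2,(1+\beta)/2}$ already carried out in the proof of Lemma~\ref{fund-res}.
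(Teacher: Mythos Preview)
Your upper-bound argument via the kernel $K_j(x,t,y)=E_j^+(x-y,t)$ and \cite[Theorem~8.2.13]{hermander1} coincides with the paper's. The divergence is in the second half: the paper does \emph{not} attempt a direct Fourier-side expansion, but instead factorises $Z=\bar Y\cdot Y$ with $Y=-D_t+A_x^\sigma$, introduces an auxiliary $\tilde u=\tilde E\ast_x u_0$ solving $Y\tilde u=0$, corrects the non-symbol $y(\xi,\tau)=-\tau+b_\beta|\xi|^\sigma$ by a conic cut-off $\tilde b$ (Lemma~\ref{simbol-y}) so that $YB\in\Psi^1$ with real principal symbol $-\tau\,\tilde b$, and then invokes propagation of singularities \cite[Theorem~26.1.1]{hermander4}. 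The Hamiltonian flow of $-\tau\,\tilde b$ on $\Char(YB)$ is pure $t$-translation, giving the invariance of $\WF(\tilde u)$ directly; the link to $\WF(u^+)$ is then made through the identity $\bar Y u^+=A_x^\sigma\tilde u$ and microlocality of the fractional operators $G_{\pm\sigma}$.

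Your direct approach has a genuine gap at precisely the point you flag as delicate. The remainder $R$ is \emph{not} rapidly decreasing: already the first Taylor correction contributes a factor of order $|\xi|^{\sigma-1}$ times a polynomially bounded quantity coming from $\widehat{u_0}$, and summing the whole series yields at best $R=O\bigl(|\xi|^{M}\exp(C|\xi|^{\sigma-1})\bigr)=O(|\xi|^M)$, not $O(|\xi|^{-N})$ for all $N$. Concretely, take $u_0=\delta$, $v_0=0$, and $\psi,\chi$ with $\psi'(0)\neq0$, $\int t\chi(t)\,dt\neq0$: then $A_+\equiv\tfrac12\psi(0)$ is constant, while the exact value of $\widehat{\phi u^+}$ along $\tau=b_\beta|\xi|^\sigma$ differs from $A_+\widehat\chi(0)$ by a nonzero multiple of $|\xi|^{\sigma-1}$ plus lower order, so $R$ there is only $O(|\xi|^{\sigma-1})$. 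Your proposed fix --- recognising the $k$-th Taylor term as $\widehat{\psi_k u_0}$ for some cutoff $\psi_k$ --- cannot work as stated, because $(|\xi'|^\sigma-|\xi|^\sigma)^k$ depends on \emph{both} $\xi$ and $\xi'$, not on $\xi-\xi'$ alone, so the $\xi'$-integral is not a convolution and does not undo to multiplication by a fixed spatial cutoff. And even granting such a representation, you justify rapid decay of those terms ``under the standing hypothesis on $\WF(u_0)$'': for the translation-invariance claim no such hypothesis is in force (you are trying to prove the $\WF$-criterion is $t_0$-independent regardless of whether it holds), so the argument is circular there; for the equality claim it is exactly the implication $(x_0,t_0;\xi_0,0)\notin\WF(u^+)\Rightarrow(x_0,\xi_0)\notin\WF(u_0)$ that must be established, which again requires controlling $R$ \emph{before} anything is known about $\WF(u_0)$. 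The paper's propagation-of-singularities route avoids all of this by never needing a remainder estimate.
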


To prepare for the proof of the theorem we need a technical lemma on ``symbol corrections''.
\begin{lemma}
\label{simbol-y}Let $\sigma \in \left( 0,1\right)$ and $y(\xi,\tau) = - \tau + b_\beta |\xi|^\sigma$.  Let $\Gamma \subseteq \R^2$ (representing the $(\xi,\tau)$-plane) be the union of a closed disc around $(0,0)$ and a closed narrow cone containg the $\tau$-axis and being symmetric with respect to both axes. Let $\Gamma'$ be a closed set of the same shape as $\Gamma$, but with slightly larger disc and opening angle of the cone. Let $\tilde{b}\in S^{0}\left( \mathbb{R}^{2}\times \mathbb{R}^{2}\right)$ such that $\tilde{b}(x,t,\xi,\tau)$ is real, constant with respect to $(x,t)$, homogenous of degree $0$ with respect to $(\xi,\tau)$  away from the disc contained in $\Gamma'$, and such that $\tilde{b}(x,t,\xi,\tau) = 0$, if $(\xi,\tau) \in \Gamma$, $\tilde{b}(x,t,\xi,\tau) = 1$, if $(\xi,\tau) \not\in \Gamma'$. Then $y \, \tilde{b}$ is a symbol belonging to the class $S^{1}\left( \mathbb{R}^{2}\times
\mathbb{R}^{2}\right)$.
\end{lemma}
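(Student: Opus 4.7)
The plan is to verify the $S^1$-estimates directly, having first isolated the fact that the cutoff $\tilde b$ precisely kills the only defect of $y$ --- the non-smoothness of $|\xi|^\sigma$ at $\xi=0$ --- and the fact that outside $\Gamma$ the variable $\xi$ is large enough to make $|\xi|^\sigma$ and its derivatives behave like symbols of order at most $1$. Since $y\tilde b$ is constant in $(x,t)$, only the $(\xi,\tau)$-derivatives need attention.

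The key geometric observation I would establish first is that on $\mathbb{R}^2\setminus\Gamma$, the variable $\xi$ is bounded below both in absolute value and as a fixed fraction of $|(\xi,\tau)|$: the narrow cone component of $\Gamma$ around the $\tau$-axis has the form $|\xi|\leq c|\tau|$ for some small $c>0$, so its complement yields $|\xi|\geq c_1|(\xi,\tau)|$, and intersection with the complement of the disc gives in addition $|\xi|\geq c_0>0$. I would then note that $\Gamma$ contains an open neighborhood of the entire $\tau$-axis and of the origin, hence $y\tilde b\equiv 0$ there, while away from the $\tau$-axis, $|\xi|^\sigma$ is smooth; thus $y\tilde b\in C^\infty(\mathbb{R}^2\times\mathbb{R}^2)$ as a whole.

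For the symbol estimates themselves I would apply the Leibniz rule
\[
\partial_{(\xi,\tau)}^{\beta}(y\tilde b) \;=\; \sum_{\beta_1+\beta_2=\beta}\binom{\beta}{\beta_1}\,\partial^{\beta_1}y\cdot\partial^{\beta_2}\tilde b,
\]
noting that only points outside $\Gamma$ contribute nontrivially. There $\tilde b\in S^0$ yields $|\partial^{\beta_2}\tilde b|\leq C(1+|(\xi,\tau)|)^{-|\beta_2|}$, while the explicit formulas
\[
\partial_\xi^k y \;=\; b_\beta\,\sigma(\sigma-1)\cdots(\sigma-k+1)\,|\xi|^{\sigma-k}\operatorname{sgn}(\xi)^k,\qquad \partial_\tau y=-1,
\]
combined with the vanishing of all higher $\tau$-derivatives and of $\partial_\tau\partial_\xi^k y$ for $k\geq 1$, and with $|\xi|\sim|(\xi,\tau)|$ plus $\sigma\leq 1$, furnish $|\partial^{\beta_1}y|\leq C(1+|(\xi,\tau)|)^{1-|\beta_1|}$ in every non-vanishing case. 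The case $|\beta_1|=0$ reduces to $|y|\leq |\tau|+b_\beta|\xi|^\sigma\leq C(1+|(\xi,\tau)|)$, once again using $\sigma\leq 1$. Multiplying the Leibniz summands and summing gives the desired $(1+|(\xi,\tau)|)^{1-|\beta|}$ bound.

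The main obstacle is the careful verification that the cutoff geometry indeed makes $|\xi|$ comparable to $|(\xi,\tau)|$ on $\mathbb{R}^2\setminus\Gamma$; without this comparison, neither the non-smoothness of $|\xi|^\sigma$ at the origin nor its merely order-$\sigma$ growth could be absorbed into an order-$1$ estimate. Once that geometric reduction is in place, everything else is routine bookkeeping with Leibniz and the $S^0$ bounds for $\tilde b$.
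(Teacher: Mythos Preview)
Your proposal is correct and follows essentially the same route as the paper: both reduce to the region outside $\Gamma$, use the geometric observation that there $|\xi|$ is comparable to $|(\xi,\tau)|$ (the paper phrases this as $|\tau|<c|\xi|$), and then estimate the Leibniz terms $\partial^{\beta_1}y\cdot\partial^{\beta_2}\tilde b$ by combining the $S^0$ bounds for $\tilde b$ with the explicit $|\xi|^{\sigma-k}$ behavior of the $\xi$-derivatives of $y$. The paper's writeup spells out the final chain of inequalities a bit more explicitly, but the argument is the same.
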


\begin{proof} By construction of $\tilde{b}$, it suffices to check the symbol
estimates when $(\xi,\tau)$ is outside $\Gamma$, say $|\tau| < c
|\xi|$, and $|\xi| + |\tau|$ is large. The upper bound in order
zero is clear from $|y \, \tilde{b}| \leq C_0 (|\xi|^\sigma +
|\tau|) \leq C_0' (1 + |\xi| + |\tau|)$.  A derivative
$\partial_\xi^{\alpha_1} \partial_\tau^{\alpha_2} (y \,
\tilde{b})$ with $\alpha_1 + \alpha_2 = n \geq 1$ involves (apart
from combinatorial constants) only nonzero terms of the form
$\partial_\tau  y \, \partial_{\xi }^{l}\partial _{\tau }^{m-1}
\tilde{b} = - \partial_{\xi }^{l}\partial _{\tau }^{m-1}
\tilde{b}$ with $l + m = n$ or  $\partial _{\xi }^{k}  y \,
\partial_{\xi }^{l}\partial _{\tau }^{m} \tilde{b}$  with $k + l +
m = n$, hence it suffices to estimate these. We have
$|\partial_{\xi }^{l}\partial _{\tau }^{m-1} \tilde{b}(\xi,\tau)|
\leq C_{l,m} \, (1 + |\xi| + |\tau|)^{0 - l - (m-1)} = C_{l,m} \,
(1 + |\xi| + |\tau|)^{1 - n}$ and
\begin{eqnarray*}
\left\vert \partial _{\xi }^{k}y\left( \xi ,\tau \right) \partial
_{\xi }^{l}\partial _{\tau }^{m} \tilde{b} \left( \xi ,\tau \right)
\right\vert &=& C'\, \left\vert \xi \right\vert ^{\sigma
-k}\left\vert \partial _{\xi }^{l}\partial _{\tau }^{m} \tilde{b} \left( \xi
,\tau \right) \right\vert \leq \tilde{C}\, \frac{\left( 1+\left\vert
\xi \right\vert +\left\vert \tau \right\vert
\right) ^{-l-m}}{1+\left\vert \xi \right\vert ^{k-\sigma }} \\
&\leq &\tilde{C}\, \frac{\left( 1+\left\vert \xi \right\vert
+\left\vert \tau \right\vert \right)
^{-l-m}}{1+\frac{1}{2}\left\vert \xi \right\vert
^{k-1}+\frac{1}{2}\left\vert \xi \right\vert ^{k-1}}\leq
\tilde{C}\, \frac{\left( 1+\left\vert \xi \right\vert +\left\vert
\tau \right\vert \right) ^{-l-m}}{1+\frac{1}{2}\left\vert \xi
\right\vert ^{k-1}+\frac{1}{2c}\left\vert \tau
\right\vert ^{k-1}} \\
&\leq & C \, \left( 1+\left\vert \xi \right\vert +\left\vert
\tau \right\vert \right) ^{1-k-l-m}
=  C \,\left( 1+\left\vert \xi \right\vert +\left\vert
\tau \right\vert \right) ^{1-n}.
\end{eqnarray*}
\end{proof}

\begin{proof}[Proof of the Theorem] We consider the case when $v_0$ is smooth, which we may
immediately reduce to $v_0 = 0$, since the contribution of $E_1^+ \ast_x v_0$ to the solution is smooth.
Put $K=f^{\ast }E_0^+$, where $f \colon \mathbb{R} \, \times ]0,\infty[\, \times \R \to
\mathbb{R} \, \times ]0,\infty[ \, =: \Omega$ is given by $f\left(x,t,y\right) = \left(
x-y,t\right),$ and $f^{\ast }$ is the distributional pull-back in the
sense of \cite[Theorem 6.1.2]{hermander1}. Then $K\in
\mathcal{D}^{\prime }\left(\Omega \times \R
\right)$ and \cite[Theorem 8.2.4]{hermander1} and Lemma \ref{fund-res} imply
\begin{equation*}
\WF\left( K\right) \subseteq \left\{ \left( x,t,y;\xi ,0,-\xi \right)
\mid x=y, \xi \neq 0\right\} =\left\{ \left( x,t,x\right) \mid x,t\in
\mathbb{R} \right\} \times \left\{ \left( \xi ,0,-\xi \right) \mid
\xi \neq 0\right\}.
\end{equation*}

We have $u^+ = u \!\mid_{\{t > 0\}} = u_0 \ast_x E_0^+$, whose
action on test functions $\varphi \in \mathcal{D}(\Omega)$ can be
written in the form $\langle u^+ , \varphi \rangle = \langle K,
\varphi \otimes u_0 \rangle$, if $u_0 \in \mathcal{D}(\R)$, i.e.,
$u_0 \mapsto u^+$ is the linear map $S \colon \mathcal{D}(\R) \to
\mathcal{D}'(\Omega)$ with distribution kernel $K$. Since
$\WF'(K)_{\R} = \{ (y,\eta) \mid \exists (x,t) \colon
(x,t,y;0,0,-\eta) \in \WF(K) \} = \emptyset$, \cite[Theorem
8.2.13]{hermander1} implies that $S$ may be extended to a map
$\mathcal{E}^{\prime}\left(\mathbb{R} \right) \to
\mathcal{D}'(\Omega)$ and satisfies
\begin{equation*}
\WF\left( Su_{0}\right) \subseteq \WF\left( K\right) _{\Omega}\cup
{\WF}' \left( K\right) \circ \WF\left( u_{0}\right),
\end{equation*}
where $\WF\left( K\right) _{\Omega} =\left\{ \left( x,t ; \xi
,\tau \right) \mid  \left( x,t,y ; \xi ,\tau ,0\right) \in
\WF\left( K\right)  \text{ for some } y\in \R\right\} =\emptyset$
and ${\WF}' \left( K\right) =\left\{ \left( x,t,y ; \xi ,\tau
,\eta \right) \mid \left( x,t,y ; \xi ,\tau ,-\eta \right) \in
WF\left( K\right) \right\} =\left\{ \left( x,t,x ; \xi ,0,\xi
\right) \mid t > 0, \xi \neq 0 \right\}$. Thus, we obtain
\begin{eqnarray*}
\WF\left( u^+ \right) &\subseteq &\left\{ \left( x,t ; \xi ,\tau \right) \mid
\exists \left( y,\eta \right) \in \WF\left( u_{0}\right) \colon \left(
x,t,y ; \xi ,\tau ,\eta \right) \in {\WF}^{\prime }\left( K\right) \right\} \\
&\subseteq &\left\{ \left( x,t ; \xi ,\tau \right) \mid t > 0, \exists \left( y,\eta
\right) \in WF\left( u_{0}\right) \colon y=x,\tau =0,\eta =\xi \right\} ,
\end{eqnarray*}
i.e.,
\begin{equation}
\WF\left( u^+ \right) \subseteq \left\{ \left( x,t ; \xi ,0\right) \mid
( x,\xi ) \in \WF ( u_{0}), t > 0 \right\},
\label{u-wo}
\end{equation}
and the remaining part of the proof is concerned with showing that equality holds in (\ref{u-wo}).

As in the proof of Lemma \ref{fund-res} let $\tilde{E}(t)  := \mathcal{F}_{\xi \rightarrow x}^{-1} \left[ \exp\left(i b_{\beta } t |\xi|^{\sigma}\right) \right]$ with $\sigma := (1 + \beta)/{2}$, but this time for any $t \in \R$, and put $\tilde{u}(t) := \tilde{E}(t) \ast u_0$. We have
$ D_t \widehat{\tilde{E}(t)} =
  \frac{1}{i}\partial _{t} \widehat{\tilde{E} (t)} =  b_{\beta }
   \vert \xi \vert ^{\sigma } \mathrm{e}^{\mathrm{i}b_{\beta } \vert
\xi \vert ^{\sigma }t} = b_{\beta }\vert \xi \vert
^{\sigma }\widehat{\tilde{E}(t) }$,
which implies
\begin{equation*}
 Y \tilde{E} :=  -D_{t}\tilde{E} +A_{x}^{\sigma }\tilde{E} =0,
 \quad \tilde{E}\left( 0\right) =\delta,
\end{equation*}
where $A_{x}^{\sigma }w= \mathcal{F}_{\xi \rightarrow
x}^{-1}\left[ b_{\beta }\left\vert \xi \right\vert ^{\sigma
}\right] \ast_x w$ (with $w$ of the type as in Remark
\ref{fund-rem}(ii)). Moreover, $\tilde{u}$ solves the initial
value problem
\begin{equation}
 Y \tilde{u} =
 \left( -D_{t}+A_{x}^{\sigma }\right) \tilde{u}
=\left( -D_{t} \tilde{E} +A_{x}^{\sigma} \tilde{E} \right) \ast_x u_0 = 0,
  \quad \tilde{u}(0) =u_{0}. \tag{$*$}
\end{equation}

Note that, since $b_{\beta}^{2} \left\vert \xi \right\vert ^{2\sigma} = \sin(\beta \pi/2) |\xi|^{\beta + 1}$ is precisely the ``symbol'' of $- \mathcal{E}_x^\beta \, \partial_x$, we have a (commutative) factorization of $Z$ by $\left( D_{t}+A_{x}^{\sigma }\right) \left( -D_{t}+A_{x}^{\sigma}\right) v=-D_{t}^{2}v+A_{x}^{\sigma }A_{x}^{\sigma }v=\partial
_{t}^{2}v+\mathcal{F}_{\xi \rightarrow x}^{-1} (b_{\beta}^{2}\left\vert \xi \right\vert ^{2\sigma }) \ast_x v=Zv$, i.e., $Z = \bar{Y} \cdot Y$, where we have put $\bar{Y} := D_{t}+A_{x}^{\sigma }$.

Before entering the detailed microlocal analysis of $\tilde{u}$
let us anticipate its relevance for $u^+$: We will show in
Equation \eqref{u-tilda-u} below, that in the region with $t > 0$,
$\tilde{u}$ provides a ``lower bound'' for the wave front set of
$u^+$, in fact, we will show equality of the wave front sets at
the end of the proof.

In studying the propagation of singularities for problem ($*$) we
encounter the nuisance that $y\left( \xi ,\tau \right) =-\tau
+b_{\beta }\left\vert \xi \right\vert ^{\sigma }$ is not quite a
symbol of order $1$ in $\xi$ and $\tau$, since $y$ is nonsmooth at
zero and,  furthermore, the symbol estimates obviously fail, e.g.,
for $\vert \partial _{\xi }^{2} y (\xi ,\tau) \vert = \vert \sigma
(\sigma - 1) \vert \xi \vert ^{\sigma -2}\vert$ when $\tau \to
\infty$, there would have to be a bound of decrease $(1 + |\tau| +
|\xi|)^{-1}$ for large $|\tau| + |\xi|$. A remedy of this second
kind of ``symbol failure'' is discussed in \cite[Theorem
18.1.35]{hermander3}, see also a comment below \cite[Theorem
23.1.4]{hermander3}, which we essentially follow in studying the
propagation of singularities for $\tilde{u}$ considered as
solution to
\begin{equation*}
 Y B\, \tilde{u} = B Y \tilde{u}=0, \quad \tilde{u}\left( 0\right) =u_{0},
\end{equation*}%
where $B=\text{op}\left( \tilde{b}\right) \in \Psi ^{0}\left( \mathbb{R}^{2}\right)$ is the pseudodifferential operator associated with a symbol $\tilde{b}$ given as in Lemma \ref{simbol-y}. Thus, $Y B = \text{op}\left(
y \tilde{b}\right) \in \Psi ^{1}\left(\mathbb{R}^{2}\right)$ has principal symbol
\begin{equation*}
q\left( \xi ,\tau \right) :=-\tau\, \tilde{b} \left( \xi ,\tau \right),
\end{equation*}
which is real and homogeneous of degree one, and, modulo a regularizing contribution, can be considered properly supported.

By \cite[Theorem 26.1.1]{hermander4} $\WF(\tilde{u})$ is invariant under the flow corresponding to the Hamiltonian vector field
\begin{equation*}
H_{q}( x,t,\xi ,\tau ) =
\begin{pmatrix}
-\partial _{\xi }q \\
-\partial _{\tau }q \\
\partial _{x}q \\
\partial _{t}q
\end{pmatrix}
 =
\begin{pmatrix}
\tau \partial _{\xi } \tilde{b} \\
\tilde{b} + \tau \partial _{\tau } \tilde{b} \\
0 \\
0
\end{pmatrix}
\end{equation*}
and is contained in the characteristic set $\Char\left( YB\right)
= \mathbb{R}^{2}\times \left\{ \left( \xi ,0\right) \mid \xi \neq
0\right\}$, i.e., $\WF(\tilde{u})\subseteq \Char\left( YB\right)$
In fact, a refinement of the latter inclusion relation is
available, since $\tilde{u} = \tilde{E} \ast_x u_0$ and we may
argue very similar to proof of (\ref{u-wo}), noting (as in Claim 2
of the proof of Lemma \ref{fund-res}) that $\tilde{E}$ is
microlocally equivalent to $\mathcal{F}_{\xi \to x}(\cos(b_\beta
|\xi|^\sigma t))$, and deduce
\begin{equation}
  \WF( \tilde{u}) \subseteq \{ ( x,t ; \xi ,0) \mid ( x,\xi ) \in \WF ( u_{0}), t \in \R \}.
\label{tilde-u-wo}
\end{equation}

Solving the Hamiltonian equations
\begin{eqnarray*}
&&\dot{x}=\tau \,\partial _{\xi } \tilde{b}( \xi ,\tau )
,\;\;\dot{t} = \tilde{b}( \xi ,\tau ) +\tau\, \partial _{\tau
} \tilde{b}( \xi ,\tau )
,\;\;\dot{\xi}=0,\;\;\dot{\tau}=0,\\
&& \text{with } ( x( 0) ,t( 0) ,\xi ( 0)
,\tau ( 0) ) =( x_{0},t_{0},\xi _{0},0) \in \Char( YB),
\end{eqnarray*}
we obtain $\forall s \in \R$: $x( s) =x_{0}$, $t( s) =t_{0} + s \,
\tilde{b}( \xi _{0},0)$, $\xi (s) = \xi _{0}$, $\tau( s) =0$. We
may suppose that $\tilde{b}(\xi_0,0) = 1$, since $\xi_0 \neq 0$
and characteristic sets as well as wave front sets are conic with
respect to the cotangent fibers. Hence the bicharacteristic flow
evolves along the $t$-direction with fixed cotangent directions
$(\xi_0,0)$ only. Therefore we have $(x_0,t_0; \xi_0,0) \in
\WF(\tilde{u})$ if and only if $(x_0,0; \xi_0,0) \in
\WF(\tilde{u})$. We claim that the latter is in turn equivalent to
$(x_0,\xi_0) \in \WF(u_0)$, from which, together with
(\ref{tilde-u-wo}), we may then conclude
\begin{equation}
  \WF( \tilde{u}) = \{ (x,t;\xi ,0) \mid ( x,\xi ) \in \WF( u_{0}), t \in \R \}.
\label{prop-sing}
\end{equation}
We have claimed: $(x_0,0;\xi_0,0) \in \WF( \tilde{u})$ $\Leftrightarrow$ $( x_{0},\xi _{0}) \in WF(u_{0})$.\\
The implication `$\Rightarrow$' follows from \eqref{tilde-u-wo}. For the converse, note that, according to \cite[Theorem 8.2.4]{hermander1} and the fact that $\WF(\tilde{u}) \subseteq \Char(Y B)$ contains no directions $(0,\tau)$ in the fiber, we may write $\tilde{u}(t) = f_t^* \tilde{u}$ for any $t \in \R$, where $f_t(x) = (x,t)$ as a  map $\R \to \R^2$, and obtain
$$
   \WF(\tilde{u}(t)) \subseteq f_t^* \WF(\tilde{u}) =
   \{ (x,\xi) \mid (x,t;\xi,0) \in \WF(\tilde{u}) \} =
    \{ (x,\xi) \mid (x,0;\xi,0) \in \WF(\tilde{u}) \},
$$
where the last equality follows from the Hamiltonian invariance
proven above. In particular, when $t = 0$ we have $\tilde{u}(0) =
u_0$, so that $\WF(u_0) = \WF(\tilde{u}(0)) \subseteq \{ (x,\xi)
\mid (x,0;\xi,0) \in \WF(\tilde{u}) \}$, which proves the part
`$\Leftarrow$' of the claim and thus establishes
\eqref{prop-sing}.


We are now ready to clarify the microlocal relation between $\tilde{u}$ and $u^+$: In the subdomain with $t > 0$ we have
\begin{equation} \label{YA}
  \bar{Y} u^+ =\left( D_{t} E_0^+ +A_{x}^{\sigma }E_0^+ \right) \ast_x u_{0} =
 A_{x}^{\sigma}\tilde{E} \ast_x u_{0} =
  A_{x}^{\sigma}\tilde{u},
\end{equation}
since
$\mathcal{F}_{x\rightarrow \xi }\left[ D_{t}E_0^+\left( t\right) +A_{x}^{\sigma
}E_0^+\left( t\right) \right] = - i \partial _{t}\left( \cos
\left( b_{\beta }\left\vert \xi \right\vert ^{\sigma }t\right) \right)
+b_{\beta }\left\vert \xi \right\vert ^{\sigma }\cos \left( b_{\beta
}\left\vert \xi \right\vert ^{\sigma }t\right) =b_{\beta }\left\vert \xi
\right\vert ^{\sigma }\widehat{\tilde{E}\left( t\right)}$.

Denoting by $\tilde{u}^+$ the restriction of $\tilde{u}$ to the half-plane of positive time we claim that the following two inclusions hold:
$$
  (\text{I})\enspace \WF(\bar{Y} u^+) \subseteq \WF(u^+), \quad \text{and} \quad
  (\text{II})\enspace \WF(\tilde{u}^+) \subseteq \WF(A_x^\sigma \tilde{u}^+).
$$
Since $\bar{Y} = D_t + A_x^\sigma$ and $D_t$ clearly is a microlocal, i.e, $\WF(D_t w) \subseteq \WF(w)$ for any $w \in \mathcal{D}'(\R^2)$, we may reduce (I) to the statement $\WF(A_x^\sigma u^+) \subseteq \WF(u^+)$. Furthermore, $A_x^\sigma$, acting only in the $x$-variable, commutes with restriction to $t > 0$, hence in intermediate steps we may  consider $A_x^\sigma$ as convolution on $\R^2$ with $\mathcal{F}^{-1}_{\xi \to x}(b_\beta |\xi|^\sigma) \otimes \delta(t)$ and restrict to $t > 0$ afterwards.

Note that we have $\tilde{u} = B_x^\sigma A_x^\sigma \tilde{u} =
A_x^\sigma B_x^\sigma \tilde{u}$, where $B_x^\sigma$ is
$x$-convolution with the inverse Fourier transform of the locally
integrable function $\xi \mapsto 1/(b_\beta |\xi|^\sigma)$. Thus,
the statements (I) and (II) are equivalent to showing
$\WF(G_\sigma u^+) \subseteq \WF(u^+)$ and $\WF(G_{-\sigma}
A_x^\sigma\tilde{u}^+) \subseteq \WF(A_x^\sigma\tilde{u}^+)$  with
$G_\gamma$ and $g_\gamma$ specified as in Remark
\ref{fund-rem}(ii) with $\gamma = \sigma$ or $\gamma = -\sigma$
(both in the range $-1 < \gamma < 1$); clearly, $G_\gamma$ also
commutes with restriction to $t > 0$ and $A_x^\sigma = b_\beta
G_{\sigma}$, $B_x^\sigma = G_{-\sigma} / b_\beta$.

The operator $G_\gamma$ can be considered as convolution on $\R^2$
with the distribution $g_\gamma (x) \otimes \delta(t)$, where
$\widehat{g_\gamma}(\xi) = |\xi|^\gamma$. The one-dimensional
homogeneous distribution $g_\gamma$ can be determined explicitly
via \cite[Example 7.1.17]{hermander1}, showing directly that
$\singsupp(g_\gamma) = \{ 0 \}$, and we easily deduce from
\cite[Theorem 8.1.8.]{hermander1} that $\WF(g_\gamma) = \{ (0,\xi)
\mid \xi \neq 0\}$. Hence \cite[Theorem 8.2.9]{hermander1} gives
$$
  \WF(g_\gamma \otimes \delta) \subseteq
  \{ (0,0;\xi,\tau) \mid (\xi,\tau) \neq (0,0) \} \cup
  \{ (x,0;0,\tau) \mid x \in \R, \tau \neq 0 \}
$$
and, recalling from Remark \ref{fund-rem}(ii) that $G_\gamma w =
(g_\gamma \otimes \delta) \ast w$ is defined in case of $w = u^+$
or $w = A_{x}^{\sigma}\tilde{u}^+$, we may prove by cut-off
techniques the appropriate extension of \cite[Equation
(8.2.16)]{hermander1} to these cases and obtain
$$
     \WF(G_\gamma w) \subseteq \WF(w) \cup
     \underbrace{\{ (x,t;0,\tau) \mid \exists y \in \R \colon (y,t;0,\tau)
       \in \WF(w) \}}_{=: \WF_{\text{vert}} (w)}.
$$
Equations \eqref{prop-sing} and \eqref{u-wo} show
$\WF_{\text{vert}} (\tilde{u}^+) = \emptyset$ and
$\WF_{\text{vert}} (u^+) = \emptyset$, respectively, hence the
proof of (I) and (II) is complete.

We may now put (I) and (II) to use with the outermost equalities in \eqref{YA} and arrive at the following:
\begin{equation}
\WF (\tilde{u}^+) \subseteq \WF ( A_{x}^{\sigma}\tilde{u}^+) =
\WF (\bar{Y} u^+) \subseteq \WF (u^+). \label{u-tilda-u}
\end{equation}


In summary, combining Equations \eqref{prop-sing} and \eqref{u-tilda-u} with \eqref{u-wo} we obtain
\begin{multline*}
  \{ (x,t;\xi ,0) \mid t > 0, ( x,\xi ) \in \WF( u_{0}) \} =
    \WF(\tilde{u}^+) \subseteq \WF (u^+)\\  \subseteq
    \{ (x,t;\xi ,0) \mid t > 0, ( x,\xi ) \in \WF( u_{0}) \},
\end{multline*}
hence we have, in fact, equality in all places of the above chain of inclusions, thereby the proof of the theorem in case $v_0 = 0$ is completed.

As shown in Lemma \ref{fund-res} the microlocal structure of
$E_1^+$ is equivalent to that of $E_0^+$, hence we have the same
kind of wave front set statement with $v_0$ in place of $u_0$, if
$u_0$ is smooth, since in this case $u^+ = E_1^+ \ast_x v_0$ plus
a smooth contribution steming from $u_0$.

Finally, the solution in the general case $u_0, v_0 \in \mathcal{E}'(\R)$ is just the sum of the two solutions for the special cases $v_0 = 0$ and $u_0 = 0$, hence its wave front set is contained in the corresponding union. Invariance of the wave front set under positive time translations follows in this case as well, since it was established via the operator factorization $Z = \bar{Y} \cdot Y$ with subsequent ``symbol correction factor'' $B$ and is valid for solutions $w$ of $Y B w = 0$ independent of initial values.
\end{proof}

\begin{remark} The result on the wave front set of $u^+$ in the above theorem implies, in particular, smoothness of $u^+$ considered as a map from time into distributions on space  (cf.\   \cite[(23.65.5)]{diodone}), i.e., $u^+ \in C^{\infty }( ]0,\infty[,\mathcal{D}^{\prime }(\mathbb{R}))$; in addition, we have $u^+(t) \in \mathcal{S}'(\R)$ for every $t > 0$.
\end{remark}

\section{The time-fractional Zener wave equation}\label{SecZWE}

For the special case of (\ref{Duhamel}) when $\beta =1$ and $0
\leq \alpha < 1$ we obtain the so-called \emph{time-fractional
Zener wave equation}
\begin{equation}
Zu\left( x,t\right) =\partial _{t}^{2}u(x,t)-L_{t}^{\alpha
}\partial _{x}^{2}u(x,t)=u_{0}(x)\otimes \delta ^{\prime }\left(
t\right) + v_0(x) \otimes \delta(t), \label{zet-koz}
\end{equation}
whose unique solvability by distributions supported in a forward
cone has been established in \cite{KOZ10}. Here we show a kind of
non-characteristic regularity of the solution $u$ to problem
\eqref{zet-koz}.

The ``Fourier symbol'' of $Z$ is $z(\xi,\tau) = - \tau^2 + l_\alpha(\tau) \xi^2$ with
$$
  l_\alpha(\tau) :=  \frac{1+b\,{e}^
        {i\frac{\alpha \pi }{2}}(\tau -i 0)^{\alpha}}{1+a\,{e}^{i\frac{\alpha
                                                  \pi }{2}}(\tau - i 0)^{\alpha }}
   = \frac{1 + b\,i^{\alpha}\, \text{sgn}(\tau)\,\vert\tau\vert ^{\alpha}}{ 1 +
     a\, {i}^{\alpha}\,\text{sgn}(\tau)\, \vert\tau\vert^{\alpha }},
$$
to which we apply a conic cut-off to obtain a smooth symbol in both variables $(\xi,\tau)$, similarly as in Lemma \ref{simbol-y} above.

\begin{lemma}
\label{l-alfa-je-simbol}
Let $\Gamma \subseteq \R^2$ (representing the $(\xi,\tau)$-plane) be the union of a closed disc around $(0,0)$ and a closed narrow cone containg the $\xi$-axis and being symmetric with respect to both axes. Let $\Gamma'$ be a closed set of the same shape as $\Gamma$, but with slightly larger disc and opening angle of the cone. Let $\tilde{b}\in S^{0}\left( \mathbb{R}^{2}\times \mathbb{R}^{2}\right)$ such that $\tilde{b}(x,t,\xi,\tau)$ is real, constant with respect to $(x,t)$, homogenous of degree $0$ with respect to $(\xi,\tau)$  away from the disc contained in $\Gamma'$, and such that $\tilde{b}(x,t,\xi,\tau) = 0$, if $(\xi,\tau) \in \Gamma$, $\tilde{b}(x,t,\xi,\tau) = 1$, if $(\xi,\tau) \not\in \Gamma'$. Then $p := \tilde{b}\, z$ is a symbol belonging to the class $S^{2}\left( \mathbb{R}^{2}\times
\mathbb{R}^{2}\right)$.
\end{lemma}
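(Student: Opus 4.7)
The plan is to reduce the symbol estimates to two ingredients: (a) pointwise bounds on the derivatives of $l_\alpha$ away from the singular point $\tau=0$, and (b) the conic geometry of $\supp(\tilde{b})$, which forces $|\tau|$ to dominate $|\xi|$ there. Since $\tilde b$ vanishes on the closed cone $\Gamma$ around the $\xi$-axis (and on the disc), for $(\xi,\tau)\in\supp(\tilde b)$ outside a compact set one has $|\tau|\ge c_0(|\xi|+|\tau|)$ for some constant $c_0>0$; the same holds for $\supp(\partial^\gamma\tilde b)$ for any multi-index $\gamma$, since outside the disc $\tilde b$ is homogeneous of degree $0$ and equal to $1$ off $\Gamma'$, so the derivatives are supported in $\Gamma'\setminus\Gamma$. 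On the compact region $|\xi|+|\tau|\le R$ the product $\tilde b\, z$ is bounded with bounded derivatives, so the symbol estimates there are trivial, and I restrict attention to the region $|\xi|+|\tau|\ge R$.

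First I would establish the auxiliary estimate
\begin{equation*}
|\partial_\tau^k l_\alpha(\tau)|\le C_k|\tau|^{-k}\qquad(|\tau|\ge 1,\ k\ge 0).
\end{equation*}
This comes from writing $l_\alpha(\tau)=b/a+(1-b/a)/\bigl(1+a\,i^\alpha\,\text{sgn}(\tau)|\tau|^\alpha\bigr)$ on each half-line, noting that the denominator is bounded away from zero for $|\tau|\ge 1$, and verifying inductively (via the quotient/Fa\`a di Bruno rule) that each additional $\tau$-derivative produces a factor bounded by a constant times $|\tau|^{-1}$. No finer information on $l_\alpha$ is needed.

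Next, apply the Leibniz rule to $\partial_\xi^j\partial_\tau^k p=\partial_\xi^j\partial_\tau^k(\tilde b\,z)$. The $\tilde b$-factor contributes $|\partial_\xi^{j_1}\partial_\tau^{k_1}\tilde b|\le C(1+|\xi|+|\tau|)^{-j_1-k_1}$ because $\tilde b\in S^0$. For the $z$-factor, I observe that $z$ is a polynomial of degree $2$ in $\xi$ (with coefficients depending on $\tau$), so $\partial_\xi^{j_2}\partial_\tau^{k_2}z$ vanishes for $j_2\ge 3$ and otherwise equals a constant multiple of $\partial_\tau^{k_2}(-\tau^2)$ (when $j_2=0$) plus a constant multiple of $\xi^{2-j_2}\,\partial_\tau^{k_2}l_\alpha(\tau)$. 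On $\supp(\tilde b)\cup\supp(\partial\tilde b)$ in the relevant region I use $|\xi|^{2-j_2}\le(|\xi|+|\tau|)^{2-j_2}$ and $|\partial_\tau^{k_2}l_\alpha(\tau)|\le C|\tau|^{-k_2}\le C c_0^{-k_2}(|\xi|+|\tau|)^{-k_2}$, and likewise $|\partial_\tau^{k_2}(-\tau^2)|\le C(|\xi|+|\tau|)^{2-k_2}$, yielding the bound
\begin{equation*}
|\partial_\xi^{j_2}\partial_\tau^{k_2}z(\xi,\tau)|\le C(1+|\xi|+|\tau|)^{2-j_2-k_2}
\end{equation*}
wherever it is coupled with a nonzero derivative of $\tilde b$. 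Multiplying and summing the Leibniz terms produces $(1+|\xi|+|\tau|)^{2-j-k}$, which is the $S^2$ estimate.

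The main obstacle is the careful trading of $\xi$-powers for $(|\xi|+|\tau|)$-powers using the conic confinement on the support. This is what justifies turning the non-symbol behaviour of $l_\alpha$ (whose $\tau$-decay rate of $|\tau|^{-k}$ is only ``good enough'' relative to $|\tau|$, not to $|\xi|+|\tau|$) into genuine $S^2$ bounds. It is precisely to secure this geometric separation from $\tau=0$ that the cone in $\Gamma$ is chosen around the $\xi$-axis, mirroring the construction used for the space-fractional case in Lemma~\ref{simbol-y}.
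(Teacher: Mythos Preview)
Your proposal is correct and follows precisely the approach the paper intends: the paper's own ``proof'' of this lemma consists of the single sentence that it is a variation of the proof of Lemma~\ref{simbol-y}, and what you have written is exactly that variation, with the roles of $\xi$ and $\tau$ exchanged and the elementary decay estimate $|\partial_\tau^k l_\alpha(\tau)|\le C_k|\tau|^{-k}$ replacing the bound on $\partial_\xi^k|\xi|^\sigma$. Your identification of the key mechanism---using the conic support of $\tilde b$ to convert $|\tau|^{-k}$ into $(1+|\xi|+|\tau|)^{-k}$---is spot on and in fact more explicit than what the paper spells out.
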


The proof is a variation of that of Lemma \ref{simbol-y}.

\begin{theorem}
For the wave front set of $u^+$, the restriction of the solution $u$ to (\ref{zet-koz}) to forward time $t > 0$, we have the inclusion
\begin{equation*}
  \WF ( u^+ ) \subseteq \{ ( x,t;\xi,\tau) \mid x \in \R, t > 0,
       \xi \neq 0, \tau^{2} = \frac{b}{a}\, \xi^2 \textup{ or } \tau = 0\}.
\end{equation*}
\end{theorem}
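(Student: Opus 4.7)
The plan is to analyze $u^{+}$ by means of the pseudodifferential operator $P := \text{op}(p) \in \Psi^{2}(\R^{2})$, where $p = \tilde{b}\, z$ is the corrected symbol supplied by Lemma \ref{l-alfa-je-simbol}, and then to apply microlocal elliptic regularity. Because $\tilde{b}$ is at our disposal---the cone $\Gamma$ around the $\xi$-axis may be taken arbitrarily narrow---intersecting $\Char(P)$ over admissible choices of $\tilde{b}$ will deliver exactly the set asserted in the theorem: the asymptotic cone $\tau^{2} = (b/a)\xi^{2}$ together with the $\xi$-axis direction $\tau = 0$.

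Step one is to show that $Pu$ is smooth on $\{t > 0\}$. Since $u_{0}, v_{0} \in \mathcal{E}'(\R)$, the right-hand side of \eqref{zet-koz} reads $Zu = u_{0}\otimes\delta'(t) + v_{0}\otimes\delta(t)$, which is compactly supported in the line $\{t = 0\}$, so $\WF(Zu) \subseteq \{t = 0\}$ by \cite[Theorem 8.2.9]{hermander1}, and $\widehat{Zu}(\xi,\tau) = i\tau\,\hat{u}_{0}(\xi) + \hat{v}_{0}(\xi)$ is a smooth function of moderate growth. The identity $\widehat{Pu} = \tilde{b}\,z\,\hat{u} = \tilde{b}\cdot\widehat{Zu}$---well defined because $\tilde{b}$ vanishes on a neighborhood of the locus where $l_{\alpha}$ is singular---then yields $Pu = B(Zu)$ with $B := \text{op}(\tilde{b}) \in \Psi^{0}(\R^{2})$, and pseudolocality of $B$ forces $\WF(Pu) \subseteq \WF(Zu) \subseteq \{t = 0\}$.

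Step two is to identify the principal symbol of $P$. The elementary computation
\[
  l_{\alpha}(\tau) - \tfrac{b}{a} = \frac{1 - b/a}{1 + a\,i^{\alpha}\,\text{sgn}(\tau)\,|\tau|^{\alpha}} = O(|\tau|^{-\alpha}),
\]
with the analogous control on $\tau$-derivatives, shows that on $\text{supp}(\tilde{b})$, where $|\tau| \gtrsim |\xi|$ outside a compact set, the remainder $\tilde{b}(\xi,\tau)(l_{\alpha}(\tau) - b/a)\xi^{2}$ belongs to $S^{2-\alpha}$. Hence the principal symbol of $P$ is $p_{2}(\xi,\tau) := \tilde{b}(\xi,\tau)(-\tau^{2} + (b/a)\xi^{2})$, and $P$ is microlocally elliptic at every direction $(\xi_{0},\tau_{0}) \neq 0$ with $\tilde{b}(\xi_{0},\tau_{0}) \neq 0$ and $\tau_{0}^{2} \neq (b/a)\xi_{0}^{2}$.

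Combining these two steps with microlocal elliptic regularity (e.g.\ \cite[Theorem 18.1.28]{hermander3}) gives $\WF(u^{+}) \subseteq \Char(P) \cap \{t > 0\}$. To finish, given any $(x_{0},t_{0};\xi_{0},\tau_{0})$ with $t_{0} > 0$, $\tau_{0} \neq 0$ and $\tau_{0}^{2} \neq (b/a)\xi_{0}^{2}$, the cone $\Gamma'$ can be chosen narrow enough that $(\xi_{0},\tau_{0}) \notin \Gamma'$; then $\tilde{b}(\xi_{0},\tau_{0}) = 1$ and ellipticity excludes the point from $\WF(u^{+})$. The directions that no admissible $\tilde{b}$ can rescue are precisely those with $\tau = 0$ (which belong to every admissible $\Gamma$) and those on the cone $\tau^{2} = (b/a)\xi^{2}$, yielding the asserted inclusion. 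I expect the main obstacle to be the careful verification that $\tilde{b}(l_{\alpha} - b/a)\xi^{2}$ is genuinely a subordinate lower-order symbol on $\text{supp}(\tilde{b})$---this is precisely where the cone geometry of $\tilde{b}$ engineered in Lemma \ref{l-alfa-je-simbol}, which keeps the support away from $\tau = 0$, becomes indispensable.
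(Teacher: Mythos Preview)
Your proposal is correct and follows essentially the same route as the paper: introduce the corrected operator $P = BZ \in \Psi^{2}$ from Lemma~\ref{l-alfa-je-simbol}, observe that $Pu$ is smooth for $t>0$, apply microlocal elliptic regularity \cite[Theorem 18.1.28]{hermander3} to obtain $\WF(u^{+}) \subseteq \Char(P)$, and then intersect over all admissible narrowing choices of $\Gamma \subset \Gamma'$. The only noteworthy difference is in how $\Char(P)$ is computed: the paper verifies the lower bound $|p(\xi,\tau)| \geq c(\xi^{2}+\tau^{2})$ directly on conic neighborhoods (using $l_{\alpha}(\tau) \to b/a$), whereas you first isolate the homogeneous principal part $p_{2} = \tilde{b}(-\tau^{2} + (b/a)\xi^{2})$ by showing the remainder $\tilde{b}(l_{\alpha}-b/a)\xi^{2}$ lies in $S^{2-\alpha}$, and then read off $\Char(P)$ as the zero set of $p_{2}$; this is a cosmetic repackaging of the same estimate rather than a different argument.
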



\begin{proof}
Let $B$ and $P$ be the pseudo-differential operators associated
with the symbols $\tilde{b}$ and $p$, respectively, constructed in
Lemma \ref{l-alfa-je-simbol} according to arbitrary, but fixed,
$\Gamma$ and $\Gamma'$ chosen as specified there. We have $P = B
Z$ and therefore
\begin{equation*}
   P u^+ = B Z u^+ = 0.
\end{equation*}
By non-characteristic regularity \cite[Theorem 18.1.28]{hermander3},
\begin{equation*}
 \WF\left( u\right) \subseteq \Char\left( P \right),
\end{equation*}
where the characteristic set is $\Char\left( P \right) =
\left( \mathbb{R}^{2}\times \left(\mathbb{R}^{2}\setminus
  \left\{ \left( 0,0\right) \right\} \right) \right) \setminus M$ with $M$ being defined as the set of all points $\left( x_{0},t_{0},\xi _{0},\tau _{0}\right) $ such that there exist $c>0,$ $R>0$ and a conic neighborhood $V$ of $(
\xi _{0},\tau _{0}) $ such that the following estimate holds:
\begin{equation}\label{estim-p}
   \forall (\xi ,\tau) \in V,\;\xi ^{2}+\tau ^{2} \geq R^2 \colon \quad
\left\vert p\left( \xi ,\tau \right) \right\vert \geq
  c\,(\xi^2 + \tau^2).
\end{equation}

\begin{enumerate}
\item We have $\mathbb{R}^{2}\times \left( \Gamma \setminus\{
(0,0)\} \right) \cap M = \emptyset$, since
$\tilde{b}(\xi_0,\tau_0) = 0$ whenever $(\xi_0,\tau_0) \in
\Gamma$. As $\Gamma$ gets more and more narrow (and smaller around
the origin) only points of the form $(x_0,t_0,\xi_0,0)$ will
remain with this property.

\item We have no definite  decay properties of the symbol in all
of $\R^2 \times \left(\Gamma' \setminus \Gamma\right)$, but this
will not be required as we let later shrink both $\Gamma' \supset
\Gamma$ to $\R \times \{0\}$, causing $\Gamma' \setminus \Gamma
\to \emptyset$.

\item Suppose $( x_{0},t_{0},\xi _{0},\tau _{0}) \in
\mathbb{R}^{2} \times \left( \R^2 \setminus \Gamma' \right)$, which will leave only points with $\tau_0 \neq 0$ upon the shrinking process of $\Gamma'$ and $\Gamma$.

\begin{enumerate}
\item If $\tau _{0}^{2}=\frac{b}{a}\xi _{0}^{2}$, then the estimate
(\ref{estim-p}) must fail in any conic neighborhood of $(\xi_0,\tau_0)$, since for $\lambda > 0$ we have
$$
  \left\vert p( \lambda \xi_0 , \lambda \tau_0 ) \right\vert
   =\left\vert z ( \lambda \xi _{0},\lambda \tau _{0}) \right\vert  =
    \lambda^2\,  \xi_0^2 \underbrace{\left\vert  - \frac{b}{a}  +
        \frac{b\, i^\alpha\, \text{sgn}(\tau_0) |\tau_0|^\alpha + \lambda^{-\alpha}}
        {a\, i^\alpha\, \text{sgn}(\tau_0) |\tau_0|^\alpha + \lambda^{-\alpha}}
          \right\vert}_{=: d(\lambda)},
$$
where $d(\lambda) \to 0$ as $\lambda \to \infty$, which makes a lower bound of the form $\left\vert p( \lambda \xi_0 , \lambda \tau_0 ) \right\vert \geq c\, \lambda^2$ to hold for all $\lambda \geq R / \sqrt{\xi_0^2 + \tau_0^2}$ impossible. Thus, $( x_{0},t_{0},\xi _{0},\tau _{0}) \not\in M$.

\item If $\tau_{0}^{2}\neq \frac{b}{a}\xi _{0}^{2}$, we define a closed
conic neighborhood of the point $\left( \xi _{0},\tau _{0}\right)$
by $V:=\left\{ \left( \xi ,\tau \right) \in \mathbb{R}^{2}\mid
\left\vert \frac{\tau ^{2}-\frac{b}{a}\xi ^{2}}{\xi ^{2}+\tau
^{2}}\right\vert \geq c_{0}-\delta \right\},$ where
$c_{0}:=\left\vert \frac{\tau _{0}^{2}-\frac{b}{a}\xi
_{0}^{2}}{\xi _{0}^{2}+\tau _{0}^{2}} \right\vert$ and $0 < \delta < c_0$. Let $V_R := V \cap \{(\xi,\tau) \mid \xi^2 + \tau^2 \geq R^2 \}$ and suppose that $R > 0$ is large enough and $\delta$ chosen sufficiently small to ensure $V_R \cap \Gamma' =\emptyset$ as well as $V_R \cap \left\{ \left( \xi ,\tau \right) \mid
\tau ^{2}=\frac{b}{a}\xi ^{2}\right\} =\emptyset$.

Let $(\xi,\tau) \in V_R$, then $\tau^2 \geq (c_0 - \delta) (\xi^2 - \tau^2) + \frac{b}{a} \xi^2 \geq \min(c_0 - \delta, \frac{b}{a}) (\xi^2 + \tau^2) \geq c_1 R^2$. Since $l_\alpha(\tau) \to \frac{b}{a}$ ($|\tau| \to \infty$) we may thus choose $R$ large enough to have $l_\alpha(\tau) - \frac{b}{a} < \frac{c_0 - \delta}{2}$, if $\xi^2 + \tau^2 \geq R^2$ and $(\xi,\tau) \in V$. Thus, $(\xi,\tau) \in V_R$ implies
\begin{multline*}
   |p(\xi,\tau)| = |z(\xi,\tau)| = |\tau^2 - l_\alpha(\tau) \xi^2 | =
    | \tau^2 - \frac{b}{a} \xi^2 - (l_\alpha(\tau) + \frac{b}{a}) \xi^2 | \\
    \geq | \tau^2 - \frac{b}{a} \xi^2| - |l_\alpha(\tau) + \frac{b}{a}|\, \xi^2
    \geq (c_0 - \delta) (\xi^2 + \tau^2) - \frac{c_0 - \delta}{2} \xi^2
    \geq \frac{c_0 - \delta}{2} (\xi^2 + \tau^2).
\end{multline*}
Therefore we have in this case $( x_{0},t_{0},\xi _{0},\tau _{0}) \in M$.

\end{enumerate}
\end{enumerate}

To summarize,

\begin{equation*}
  \WF(u^+) \subseteq \Char(P) \subseteq  \mathbb{R}^{2}\times \left(
       \left( \Gamma \setminus\{ (0,0)\} \right)
        \cup \left( \Gamma' \setminus \Gamma \right)
        \cup \{ (\xi_0,\tau_0) \mid \tau_0^2 = \frac{b}{a}\xi_0^2\}
        \right).
\end{equation*}
This result holds for any $\Gamma$ and $\Gamma'$ chosen arbitrarily according to the specifications in the previous lemma. Letting $\Gamma' \supseteq \Gamma$ both shrink toward the $\xi$-axis yields the claim of the theorem, since we may use the intersection of all corresponding $(\Gamma,\Gamma')$-dependent sets in the middle and in the right part of the above chain of inclusions.
\end{proof}

\section*{Acknowledgement}

This research is supported by project P25326 of the Austrian
Science Fund and by projects $174005$, $174024$ of the Serbian
Ministry of Education and Science, as well as by project
$114-451-947$ of the Secretariat for Science of Vojvodina.


\end{document}